\theoremstyle{plain}
\newtheorem{theorem}{Theorem}
\newtheorem{lemma}[theorem]{Lemma}
\newtheorem{proposition}[theorem]{Proposition}
\newtheorem{definition}[theorem]{Definition}
\theoremstyle{definition}
\newtheorem{remark}[theorem]{Remark}
\newcommand{\beq}{\begin{equation*}}
\newcommand{\eeq}{\end{equation*}}
\DeclareMathOperator{\dmin}{d_{min}}
\DeclareMathOperator{\Supp}{Supp}
\newcommand{\Fq}{{\mathbb{F}_q}}
\newcommand{\F}{\mathbb{F}}
\newcommand{\ie}{\emph{i.e. }}
\newcommand{\moins}{\setminus}
\newcommand{\deux}[1][2]{^{\langle #1\rangle}}
\newcommand{\supco}{\emph{support condition}}
\def\epsilon{\varepsilon}
\def\subset{\subseteq}
\begin{document}

\title{An upper bound of Singleton type for componentwise products of linear codes}

\author{Hugues Randriambololona}

\maketitle

\begin{abstract}
We give an upper bound that relates the minimum weight of a nonzero componentwise product of codewords
from some given number of linear codes, with the dimensions of these codes.
Its shape is a direct generalization of the classical Singleton bound.
\end{abstract}

\section{Introduction}
\label{intro}

Let $q$ be a prime power,
and $\Fq$ the field with $q$ elements.
For any integer $n\geq1$, let $*$ denote componentwise multiplication
in the vector space $(\Fq)^n$, so
\beq
(x_1,\dots,x_n)*(y_1,\dots,y_n)=(x_1y_1,\dots,x_ny_n).
\eeq
If $C_1,\dots,C_t\subset(\Fq)^n$ are linear codes of the same length~$n$,
let
\beq
C_1*\cdots*C_t=\sum_{c_i\in C_i}\Fq\cdot c_1*\cdots*c_t\;\subset\,(\Fq)^n
\eeq
be the linear code spanned by the componentwise products of their codewords.
(In \cite{agis} this was denoted $\langle C_1*\cdots*C_t \rangle$
with brackets meant to emphasize that the linear span is taken.
Here we will keep notation lighter. All codes in this text will be linear.)

Also define the square of a linear code $C$ as
the linear code $C\deux=C*C$, and likewise for its higher powers $C\deux[t]$.

Basic properties of these operations, as well as a geometric interpretation,
will be found in \cite{AGCT}.

\medskip

Bounds on the possible joint parameters of $C$ and $C\deux[t]$, 
or more generally on that of some $C_i$ and their product $C_1*\cdots*C_t$,
have attracted attention recently for various reasons:
\begin{itemize}
\item they determine the performance of bilinear multiplication algorithms,
in particular against random or adversarial errors,
or against eavesdropping; this is useful either
in questions of algebraic complexity \cite{LW}\cite{ChCh+},
or in the study of secure multi-party computation systems \cite{CCCX}
\item since $C_1*\cdots*C_t$ captures possibly hidden algebraic relations
between subcodes $C_i$ of a larger code (given by an apparently
random generator matrix), they're at the heart of attacks \cite{CGGOT}
against McEliece type cryptosystems
\item following \cite{ZO}, the existence of asymptotically good
binary linear codes with asymptotically good squares is the key ingredient
in an improvement of the Cr\'epeau-Kilian \cite{CK} oblivious transfer protocol
over a noisy channel;
solving this problem was the main motivation for \cite{agis}
\item last, this $*$ operation is also of use in the understanding
of algebraic decoding algorithms through the notion of error-locating
pairs \cite{P}.
\end{itemize}

While it is possible to give bounds involving subtler parameters,
such as the dual distance (see Lemma \ref{dim+dperp} below for an
elementary example, or \cite{M} for a more elaborate result),
here we want to deal with ``clean'' bounds involving only the
dimensions of the $C_i$ and the minimum distance of $C_1*\cdots*C_t$.
In particular we will study the following
generalizations (introduced in \cite{agis}) of the
fundamental functions of block coding theory:
\beq
a_q\deux[t](n,d)=\max\{k\geq0\,|\,\exists C\subset(\Fq)^n,\,\dim(C)=k,\,\dmin(C\deux[t])\geq d\}
\eeq
and
\beq
\alpha_q\deux[t](\delta)=\limsup_{n\to\infty}\frac{a_q\deux[t](n,\lfloor\delta n\rfloor)}{n}.
\eeq
In fact, for $t\geq2$ we have the easy inequalities $\dim(C\deux[t])\geq\dim(C\deux[t-1])$
and $\dmin(C\deux[t])\leq\dmin(C\deux[t-1])$ (see \cite{agis}, Prop.~11), from which
one deduces
\beq
a_q\deux[t](n,d)\leq a_q\deux[t-1](n,d)\leq\dots\leq a_q(n,d)
\eeq
\beq
\alpha_q\deux[t](\delta)\leq\alpha_q\deux[t-1](\delta)\leq\dots\leq\alpha_q(\delta)
\eeq
where $a_q(n,d)$, $\alpha_q(\delta)$, are the usual,
much-studied fundamental functions; hence all the upper bounds known
on these functions apply. Here we will get a new, stronger bound, by working
directly on the generalized functions.

\medskip

The paper is organized as follows. In Section~\ref{main} we state and prove our main result, the product
Singleton bound, in full generality.
In Section~\ref{alternative} we propose an alternative proof that works only in a special case,
and moreover leads to a slightly weaker result;
but it uses entirely different methods that could be of independent interest.
Then in Section~\ref{appl} we derive our new upper bound on the fundamental functions;
in particular for $d\leq t$ we get the exact value of $a_q\deux[t](n,d)$.

\medskip

\textbf{Notations.}
We let $[n]=\{1,\dots,n\}$ be the standard set with $n$ elements.
Given a subset $I\subset[n]$, we let $\pi_I:(\Fq)^n\to(\Fq)^I$ be the natural projection.

\section{The product Singleton bound}
\label{main}

Here we state our main result, which for $t=1$ reduces to the (linear version of the) classical
Singleton bound.
For this we introduce a mild technical condition (which will be discussed further in Remark~\ref{laremarque} below).

\begin{definition}
\label{ladef}
Let $t\geq3$ be an integer and let $C_1,\dots,C_t\subset(\F_q)^n$ be linear codes
of the same length $n$. We say these $C_i$ satisfy the {\supco} if, for each coordinate $j\in[n]$,
either $j$ is in the support of \emph{all} the $C_i$, or it is in the support of \emph{at most one}
of them.
\end{definition}

\begin{theorem}
\label{letheo}
Let $t\geq1$ be an integer and let $C_1,\dots,C_t\subset(\F_q)^n$ be linear codes
of dimension $k_1,\dots,k_t$ respectively, and of the same length $n$.
Suppose $C_1*\cdots*C_t\neq0$, and if $t\geq3$ suppose they satisfy the {\supco}.
Then one can find codewords $c_i\in C_i$ such that their product $c_1*\cdots*c_t$ has weight
\beq
1\leq w(c_1*\cdots*c_t)\leq\max(t-1,\,n+t-(k_1+\cdots+k_t)\,).
\eeq
As a consequence, $\dmin(C_1*\cdots*C_t)\leq\max(t-1,\,n+t-(k_1+\cdots+k_t)\,)$.
\end{theorem}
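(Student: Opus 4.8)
The plan is to proceed by induction on $t$, reducing the product of $t$ codes to a product of $t-1$ codes. The base case $t=1$ is the classical Singleton bound: if $C_1\neq0$ has dimension $k_1$, then puncturing at any $k_1-1$ coordinates cannot kill all of $C_1$ (since the punctured code still has dimension at least $1$), so there is a nonzero codeword supported on the remaining $n-k_1+1$ coordinates. The case $t=2$ has no support condition and must be handled directly; here the natural idea is to pick a coordinate $j_0$ in the support of $C_1*C_2$, restrict attention to the coordinates where $C_1$ is ``large'', and use a dimension count: choosing $c_1\in C_1$ to vanish on a carefully selected set of coordinates forces $c_1*C_2$ to live in a short code, after which we apply the $t=1$ bound to (a punctured version of) $C_2$ intersected with the relevant support. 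The key point is that the product of the surviving supports is controlled by $n-(k_1-1)-(k_2-1)=n-k_1-k_2+2$, matching the bound $n+t-(k_1+k_2)$ with $t=2$.

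For the inductive step with $t\geq3$, I would exploit the {\supco}. Partition $[n]$ into the set $A$ of coordinates lying in the support of \emph{all} the $C_i$, and its complement $B=[n]\moins A$; by the support condition, $B$ decomposes further as a disjoint union $B_1\sqcup\cdots\sqcup B_t$ where $B_i$ consists of coordinates in the support of $C_i$ only (plus possibly coordinates in no support at all, which we may discard). The idea is to ``fold'' $C_{t-1}$ and $C_t$ into a single code: on $A$, the product $C_{t-1}*C_t$ behaves like a genuine product, but on $B_{t-1}$ only $C_{t-1}$ is active and on $B_t$ only $C_t$ is active, with all other $B_i$ coordinates contributing zero to the product. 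One then chooses generic codewords in $C_{t-1}$ and $C_t$ to produce a codeword of $C_{t-1}*C_t$ whose dimension/support profile on $A\cup B_{t-1}\cup B_t$ is at least $k_{t-1}+k_t-1-|B_{t-1}|-|B_t|$ worth of ``new'' constraints, arranging that the resulting collection $C_1,\dots,C_{t-2},C_{t-1}*C_t$ (suitably punctured) still satisfies the support condition for $t-1$ codes, so the induction hypothesis applies and yields the claimed weight bound after adding back the contributions of $B_{t-1},B_t$.

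The main obstacle will be the bookkeeping in the inductive step: making precise the sense in which $C_{t-1}$ and $C_t$ can be merged into one code of ``effective dimension'' $k_{t-1}+k_t-1$ while preserving the support condition, and checking that the dimension drop is no worse than $1$ rather than accumulating losses. Concretely, one must choose a codeword $c\in C_{t-1}*C_t$ that is nonzero on the chosen coordinate $j_0$ but whose projection to $A$ spans, together with appropriate choices, enough room; the subtlety is that $C_{t-1}*C_t$ restricted to $A$ can have dimension as small as $\min(k_{t-1},k_t)$ up to $k_{t-1}+k_t-1$, so one has to track which of these extremes occurs and feed the right quantity into the recursion. I expect the $\max(t-1,\cdot)$ term to emerge precisely as the degenerate case where $k_1+\cdots+k_t$ is large (each $k_i$ close to its maximum on its own support), forcing the product to be supported on a tiny set whose size is governed by the number of codes rather than by the Singleton deficit; handling this case separately — essentially observing that a product of $t$ nonzero codewords with overlapping supports already has support of size at least $1$ and the merging argument caps the support at $t-1$ — will require care but should be routine once the generic case is in place.
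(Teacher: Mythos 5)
Your inductive ``folding'' step contains a genuine gap, and it is precisely where the difficulty of the theorem lies. To pass from $t$ to $t-1$ codes by replacing $C_{t-1},C_t$ with $D=C_{t-1}*C_t$ you need $D$ to carry an ``effective dimension'' of $k_{t-1}+k_t-1$, but nothing of the sort is true: the dimension of a product can collapse, as the paper's own example shows (the code $C\subset(\F_q)^4$ spanned by $(1,1,0,0)$ and $(0,0,1,1)$ has $C*C=C$, so $\dim(C*C)=2<2+2-1$). Feeding $\dim D$ into the recursion then yields a bound of the form $n+(t-1)-(k_1+\cdots+k_{t-2}+\dim D)$, which is strictly weaker than the claimed $n+t-(k_1+\cdots+k_t)$ exactly in these degenerate cases; you acknowledge this (``dimension as small as $\min(k_{t-1},k_t)$'') but offer no mechanism to recover the loss, and the $\max(t-1,\cdot)$ term cannot absorb it in general. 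A second structural problem: the induction hypothesis applied to $C_1,\dots,C_{t-2},D$ produces a low-weight codeword $c_1*\cdots*c_{t-2}*d$ with $d\in D$ an arbitrary element of the span, not necessarily of the form $c_{t-1}*c_t$; so even if the weight bound went through, you would only bound $\dmin(C_1*\cdots*C_t)$, not establish the stronger conclusion that an \emph{elementary} product $c_1*\cdots*c_t$ of low weight exists. Finally, your base case $t=2$ is only sketched, and the sketch skips the crux: choosing $c_1$ and $c_2$ to vanish on many coordinates is easy, but guaranteeing that $c_1*c_2\neq0$ at the same time is the whole problem (as the paper emphasizes, one must keep the intersection of supports small \emph{and} nonempty), and a bare dimension count does not do this.

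For comparison, the paper does not induct on $t$ at all. After reducing (via the support condition) to codes of full support, it splits into two cases: when $k_1+\cdots+k_t>n$ it proves a key lemma constructing, from parity-check matrices $H_i$, sets $A_1,\dots,A_t$ with empty intersection ($A_1$ independent in $H_1$, the others maximal independent, arranged by a basis-exchange argument) together with a pivot coordinate $j_1$; an inner induction on $s$ then builds codewords $c_1,\dots,c_t$ whose product is nonzero but supported in $\{j_1,\dots,j_{t-1}\}$, giving weight at most $t-1$. When $k_1+\cdots+k_t\leq n$ it punctures down to the first $k_1+\cdots+k_t-1$ coordinates and either applies that lemma to the projections or finds a short codeword of $C_1$ in the kernel of the projection and multiplies it by codewords nonzero at a chosen support coordinate. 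If you want to salvage your plan, you would need either a correct lower bound replacing $k_{t-1}+k_t-1$, or (more realistically) to abandon the folding and prove the nonvanishing-with-small-support statement directly, which is what the paper's Lemma~\ref{lelemme} accomplishes.
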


This upper bound is tight. For example it is attained when the $C_i$ are Reed-Solomon codes,
with $k_1+\cdots+k_t\leq n$.

Also when $k_1+\cdots+k_t>n$, the upper bound $t-1$ can be attained.
For an example with $t=3$ consider the code $C$ with generator matrix
$G=\left(\begin{array}{cccc}1&1&0&0\\0&0&1&1\end{array}\right)$.
Then $tk=6>n=4$, and $C\deux[3]=C$ has $\dmin=t-1=2$.

Note that the existence of the $c_i$ is stronger than the bound
on the minimum distance alone: indeed, in general $\dmin(C_1*\cdots*C_t)$ need not
be attained by a codeword $z$ in specific product form $z=c_1*\cdots*c_t$
(one might need \emph{a linear combination} of such codewords).
However, what makes the proof difficult is that, while we want the intersection of
the supports of the $c_i$
to be small, at the same time we need to ensure it remains nonempty.

\begin{remark}
\label{laremarque}
Here we want to make a few comments about the {\supco}:
\begin{enumerate}[(a)]
\item
Although this {\supco} for $t\geq3$ might seem a little bit restrictive, in fact it is satisfied in many
important situations. For instance,
it is satisfied when $C_1,\dots,C_t$ all have full support, or when
$C_1=\dots=C_t=C$ are all equal to the same code $C$
(not necessarily of full support).
\item
However, the conclusion in Theorem~\ref{letheo} can fail if one drops the {\supco}.
For example, the codes $C_1,C_2,C_3$ with generator matrices
$G_1=G_2=\left(\begin{array}{cccc}1&1&1&0\\0&0&0&1\end{array}\right)$
and $G_3=\left(\begin{array}{cccc}1&1&1&0\end{array}\right)$
have $k_1+k_2+k_3=5>n=4$, but $\dmin(C_1*C_2*C_3)=3$.
\item
Here we assume that we have a proof of Theorem~\ref{letheo} in the particular case
where all $C_i$ have full support.

First, this allows us to deduce the following \emph{unconditional} variant:

Let $C_1,\dots,C_t\subset(\F_q)^n$ be \emph{any} linear codes
of the same length $n$. Suppose $I=\bigcap_i\Supp(C_i)\neq\emptyset$,
and let $\overline{n}=|I|$, $\overline{k}_i=\dim\pi_I(C_i)$.
Then one can find codewords $c_i\in C_i$ such that their product has weight
\beq
1\leq w(c_1*\cdots*c_t)\leq\max(t-1,\,\overline{n}+t-(\overline{k}_1+\cdots+\overline{k}_t)\,).
\eeq

Indeed, the codes $\pi_I(C_i)$ all have full support in $I$,
so by our assumption one can find codewords $\pi_I(c_i)\in\pi_I(C_i)$
satisfying the estimates.
Then just observe that $w(c_1*\cdots*c_t)=w(\pi_I(c_1)*\cdots*\pi_I(c_t))$.

Now we claim that, in turn, this implies the full statement of Theorem~\ref{letheo}. Indeed suppose
the $C_i$ satisfy the {\supco} if $t\geq3$. Write $\Supp(C_i)=I\cup J_i$.
The $J_i$ are disjoint: this is obvious if $t\leq2$, and if $t\geq3$
this is precisely the meaning of the {\supco}.
Then we have $\overline{n}\leq n-(|J_1|+\cdots+|J_t|)$,
while $\overline{k}_i\geq k_i-|J_i|$, hence
\beq
\overline{n}+t-(\overline{k}_1+\cdots+\overline{k}_t)\leq n+t-(k_1+\cdots+k_t)
\eeq
which finishes the proof.
\end{enumerate}
\end{remark}

Thanks to the equivalence of the statements in the last remark, we see that to prove
Theorem~\ref{letheo}, it suffices to do so under the additional assumption that all the codes
have full support. The key step in the proof will be the following lemma, which treats
the case of ``high dimension''.

\begin{lemma}
\label{lelemme}
Let $C_1,\dots,C_t\subset(\F_q)^n$ be linear codes
of dimension $k_1,\dots,k_t$ respectively, and of the same length $n$.
Suppose these codes all have full support, and
\beq
k_1+\cdots+k_t>n.
\eeq
Then one can find codewords $c_i\in C_i$ such that
\beq
1\leq w(c_1*\cdots*c_t)\leq t-1.
\eeq
\end{lemma}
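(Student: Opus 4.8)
\textbf{Proof plan for Lemma~\ref{lelemme}.}

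The plan is to proceed by induction on $t$, using the full-support hypothesis to keep the recursion well-behaved. For the base case $t=1$, the hypothesis $k_1>n$ is impossible (a code of length $n$ has dimension at most $n$), so there is nothing to prove; the first genuine case is $t=2$. When $t=2$ and $k_1+k_2>n$, I would argue as follows: since $\dim C_1+\dim C_2>n$, for \emph{every} coordinate $j\in[n]$ the evaluation-at-$j$ functionals on $C_1$ and $C_2$ cannot both have ``full rank to the point of forcing independence'' — more precisely, I would try to produce $c_1\in C_1$, $c_2\in C_2$ whose supports meet in exactly one point. A clean way: pick any coordinate $j$; because $C_1$ has full support there is $c_1\in C_1$ with $(c_1)_j\neq0$, and similarly $c_2\in C_2$. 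One then wants to shrink the overlap of $\Supp(c_1)$ and $\Supp(c_2)$ down to a single coordinate while keeping it nonempty. The dimension count $k_1+k_2>n$ is exactly what guarantees this is possible: if $I=\Supp(c_1)\cap\Supp(c_2)$ has size $\geq2$, pick $j\in I$ and look for a modification of, say, $c_1$ inside the coset, or use that $\dim\pi_I(C_1)+\dim\pi_I(C_2)>|I|$ when $|I|$ is still too large, which forces a nonzero vector in $\pi_I(C_1)\cap$ (a translate of) $\pi_I(C_2)$, allowing us to kill coordinates of the overlap one at a time. I expect the cleanest formulation of the $t=2$ step to be: \emph{if $\dim A+\dim B>m$ for subspaces $A,B\subseteq(\Fq)^m$ with $B$ of full support, then there exist $a\in A$, $b\in B$ with $|\Supp(a)\cap\Supp(b)|=1$}, proved by downward induction on the overlap size.

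For the inductive step from $t-1$ to $t$, I would set $C'=C_1*\cdots*C_{t-1}$ (the linear span) and try to relate $\dim C'$ to $k_1+\cdots+k_{t-1}$. The subtlety is that $\dim(C_1*\cdots*C_{t-1})$ need not be bounded below by $k_1+\cdots+k_{t-1}-(t-2)n$ in a way that survives, and moreover we need the product in \emph{pure form} $c_1*\cdots*c_{t-1}$, not a linear combination. So rather than passing through the span, I would keep all the factors explicit: by a coordinate-restriction argument, first reduce to a sub-collection of coordinates on which enough of the codes remain ``large'', then apply the induction hypothesis to $t-1$ of the codes on that restricted support to get $c_1,\dots,c_{t-1}$ with $|\Supp(c_1)\cap\cdots\cap\Supp(c_{t-1})|\leq t-2$, and finally use full support of $C_t$ together with the dimension slack ($k_t$ large, since the total exceeds $n$) to choose $c_t\in C_t$ meeting that small common support in at most one further coordinate — giving $w(c_1*\cdots*c_t)\leq t-1$. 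Keeping the intersection \emph{nonempty} throughout is the recurring danger: after each restriction one must verify a dimension inequality of the form (sum of dimensions on the current support) $>$ (size of current support), which is what propagates the hypothesis $k_1+\cdots+k_t>n$ downward.

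The main obstacle, as the authors themselves flag, is precisely this tension: we want $\bigcap_i\Supp(c_i)$ small but must not let it become empty. Concretely, when I shrink the overlap by replacing a codeword $c_i$ with another element of its code, I risk either enlarging some \emph{other} pairwise overlap or emptying the total intersection; controlling all of these simultaneously is the delicate part, and it is where the full-support hypothesis and the strict inequality $k_1+\cdots+k_t>n$ must be used in tandem — the strictness gives exactly one unit of ``room'' per step, matching the target bound $t-1$. I would expect the write-up to isolate a single combinatorial-linear-algebra lemma (something like: given subspaces whose dimensions sum to more than the ambient dimension, one can find representatives with prescribed small common support), prove that by induction on the ambient dimension or on the overlap, and then deduce Lemma~\ref{lelemme} from it with little extra work.
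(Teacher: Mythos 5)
Your plan has genuine gaps at exactly the two places where the real work lies, and the key mechanism the paper uses is missing. First, the dimension inequality you rely on to drive the shrinking does not propagate to restrictions: from $k_1+k_2>n$ you only get $\dim\pi_I(C_1)+\dim\pi_I(C_2)\geq k_1+k_2-2(n-|I|)>2|I|-n$, which exceeds $|I|$ only when $|I|\geq n$. So the claim ``$\dim\pi_I(C_1)+\dim\pi_I(C_2)>|I|$ when $|I|$ is still too large'' is false in general (take $k_1=k_2\approx n/2+1$ and $|I|\approx n/2$), and with it the proposed mechanism for killing overlap coordinates one at a time collapses; note also that a nonzero vector in $\pi_I(C_1)\cap\pi_I(C_2)$ is not a product of codewords, so it is unclear what such a vector would buy. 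The same propagation problem wrecks the inductive step $t-1\to t$: with $k_1=\dots=k_t\approx n/t$, no straightforward restriction $J$ makes the first $t-1$ projected dimensions sum to more than $|J|$, so ``apply the induction hypothesis on a restricted support'' has nothing to apply to unless $J$ is chosen by a much more specific device. Moreover, even if that step went through, lifting back is a problem: the supports of the lifted $c_1,\dots,c_{t-1}$ are uncontrolled outside $J$, so $\bigcap_{i\leq t}\Supp(c_i)$ can be large through coordinates outside $J$, and ``full support of $C_t$'' gives no way to force $c_t$ to avoid that region. (The analogous lift-back in Remark~\ref{laremarque}(c) works only because there $J$ is the common support of the codes, so products vanish off $J$ automatically.)

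The paper's proof is not an induction on $t$ with a $t=2$ base case; its essential idea, which your plan lacks, is to work with parity-check matrices $H_i$ and choose, once and for all, column sets $A_i$ (maximal independent in $H_i$, i.e.\ complements of information sets, of size $n-k_i$) together with a pivot coordinate $j_1$, arranged by a counting argument ($\sum_i|A_i|<(t-1)n$) and a basis-exchange step so that $A_1\cap\dots\cap A_t=\emptyset$ while $A_1\cup\{j_1\}$ and $A_2\cup\{j_1\}$ are dependent. The point of maximal independence is that for any $j\notin A_i$ there is $c_i\in C_i$ with $\{j\}\subset\Supp(c_i)\subset A_i\cup\{j\}$; the codewords are then built one factor at a time, maintaining the invariant $\{j_{s-1}\}\subset\Supp(c_1*\cdots*c_s)\subset(A_1\cap\dots\cap A_s)\cup\{j_1,\dots,j_{s-1}\}$, so nonemptiness is pinned at $j_{s-1}$ at every step and the empty intersection of the $A_i$ caps the final weight at $t-1$. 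This fixed-in-advance family of ``absorbing'' sets is what resolves the tension you correctly identify (keep the common support small but nonempty); without some substitute for it, your outline does not close.
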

\begin{proof}
If $H$ is a matrix with $n$ columns, we say that a subset $A\subset[n]$ is dependent
(resp. independent, maximal independent) in $H$ if, in the set of columns of $H$, those indexed by $A$
form a linearly dependent (resp. independent, maximal independent) family.

Now let $H_i$ be a parity-check matrix for $C_i$.
We claim that we can find subsets $A_1,\dots,A_t\subset[n]$, and an element $j_1\in[n]$,
such that:
\begin{enumerate}[(1)]
\item\label{intvide} $A_1\cap\dots\cap A_t=\emptyset$
\item\label{A1indep} $A_1$ is independent in $H_1$
\item\label{Aimaxindep} $A_i$ is maximal independent in $H_i$ for $i\geq2$
\item\label{j1} $A_1\cup\{j_1\}$ is dependent in $H_1$, and $A_2\cup\{j_1\}$ is dependent in $H_2$.
\end{enumerate}

These are constructed as follows. First, for all $i$, choose any $B_i\subset[n]$ maximal
independent in $H_i$, and let $I=B_1\cap\dots\cap B_t$ be their intersection.
Then $|B_1|+\cdots+|B_t|=tn-(k_1+\cdots+k_t)<(t-1)n$,
so there exists $j_1\in[n]$ that belongs to at most $t-2$ of the sets $B_i$. Say $j_1\not\in B_1$
and $j_1\not\in B_2$.

Suppose $(B_1\moins I)\cup\{j_1\}$ is independent
in $H_1$. Then $I$ is nonempty (otherwise $B_1$ would not be maximal), and
by the basis exchange property
from elementary linear algebra, one can find $j\in I$ such that $(B_1\moins\{j\})\cup\{j_1\}$
is maximal independent in $H_1$. Then we replace $B_1$ with $(B_1\moins\{j\})\cup\{j_1\}$,
which replaces $I$ with $I\moins\{j\}$.

We repeat this procedure until, obviously, it must stop,
which means $(B_1\moins I)\cup\{j_1\}$
is dependent in $H_1$. Then we set $A_1=B_1\moins I$, and $A_i=B_i$ for $i\geq2$.




\medskip

Now that this is done,
by \eqref{A1indep} and \eqref{j1} there is $c_1\in C_1$ with
\beq
\{j_1\}\subset\Supp(c_1)\subset A_1\cup\{j_1\},
\eeq
and likewise by \eqref{Aimaxindep} and \eqref{j1} there is $c_2\in C_2$ with
\beq
\{j_1\}\subset\Supp(c_2)\subset A_2\cup\{j_1\},
\eeq
hence
\beq
\{j_1\}\subset\Supp(c_1*c_2)\subset (A_1\cap A_2)\cup\{j_1\}.
\eeq
This means we have established the step $s=2$ in the following induction procedure:

Suppose for some $s\leq t$
we have found indices $j_1,\dots,j_{s-1}\in[n]$ (not necessarily distinct) and codewords
$c_1\in C_1,\dots,c_{s}\in C_{s}$ (after possibly renumbering), such that:
\begin{enumerate}[(1)]
\setcounter{enumi}{4}
\item\label{decsupport}
$\displaystyle\;\{j_{s-1}\}\subset\Supp(c_1*\cdots*c_{s})\subset (A_1\cap\dots\cap A_{s})\cup\{j_1,\dots,j_{s-1}\}.$
\end{enumerate}

If $s=t$, the proof is finished thanks to condition~\eqref{intvide}. 
So we suppose $s<t$, and we will show how to pass from $s$ to $s+1$ in the induction.

By~\eqref{decsupport} we can write
\beq
\Supp(c_1*\cdots*c_{s})=S\cup T
\eeq
with
\beq
S\subset A_1\cap\dots\cap A_{s}
\eeq
and
\beq
\{j_{s-1}\}\subset T\subset\{j_1,\dots,j_{s-1}\}.
\eeq
We distinguish two cases.

First, suppose $S=\emptyset$. Set $j_s=j_{s-1}$. Then we can find $c_{s+1}\in C_{s+1}$ nonzero at $j_s$
(because $C_{s+1}$ has full support), and we're done.

Otherwise, suppose $S\neq\emptyset$, so there is $j_{s}\in S$. By~\eqref{intvide},
there is $i>s$ such that $j_{s}\not\in A_i$. Say this is $i=s+1$.
Then, by~\eqref{Aimaxindep}, one can find $c_{s+1}\in C_{s+1}$ such that
\beq
\{j_{s}\}\subset\Supp(c_{s+1})\subset A_{s+1}\cup\{j_{s}\},
\eeq
from which it follows
\beq
\{j_{s}\}\subset\Supp(c_1*\cdots*c_{s+1})\subset (A_1\cap\dots\cap A_{s+1})\cup\{j_1,\dots,j_{s}\}.
\eeq
The proof is complete.
\end{proof}

\begin{proof}[End of the proof of Theorem~\ref{letheo}]
Thanks to Remark~\ref{laremarque}(c) we can assume all $C_i$ have full support.
Also we assume $k_1+\cdots+k_t\leq n$, otherwise it suffices to apply Lemma~\ref{lelemme}.

We conclude with the same puncturing argument as in one of the proofs of the classical Singleton bound:
let $\pi$ denote projection on the first $(k_1+\cdots+k_t)-1$ coordinates.
We distinguish two cases.

First, suppose $\dim(\pi(C_i))=\dim(C_i)=k_i$ for all $i$.
Then we can apply Lemma~\ref{lelemme} and we get $\pi(c_i)\in\pi(C_i)$ such that
$1\leq w(\pi(c_1)*\cdots*\pi(c_t))\leq t-1$.
Lifting back we find $1\leq w(c_1*\cdots*c_t)\leq n+t-(k_1+\cdots+k_t)$,
which finishes the proof.

Otherwise, if this fails say for $i=1$,
there is $c_1\in C_1$ nonzero in $\ker(\pi)$, so $w(c_1)\leq n+1-(k_1+\cdots+k_t)$.
Fix a coordinate $j\in\Supp(c_1)$ and for each $i\geq2$ take $c_i\in C_i$ nonzero at $j$
(which is possible since $C_i$ has full support).
Then $c_1*\cdots*c_t$ is nonzero with weight
$w(c_1*\cdots*c_t)\leq w(c_1)\leq n+1-(k_1+\cdots+k_t)\leq n+t-(k_1+\cdots+k_t)$,
as needed.
\end{proof}

Observe that our proof of Theorem~\ref{letheo} is constructive:
$c_1,\dots,c_t$ can be effectively computed from
given parity-check matrices of the codes.

\section{An alternative proof for $t=2$}
\label{alternative}

Consider the following statement, that is easily seen to be a special case of Theorem~\ref{letheo}.

\begin{proposition}
\label{laprop}
Let $C,C'\subset(\F_q)^n$ be linear codes of dimension $k,k'$ respectively,
and of the same length $n$. Then their product $C*C'$ has minimum distance
\beq
\dmin(C*C')\leq\max(1,n-k-k'+2).
\eeq
\end{proposition}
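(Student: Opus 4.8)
The plan is to deduce Proposition~\ref{laprop} directly from Theorem~\ref{letheo} applied with $t=2$. Since for $t=2$ the {\supco} is vacuous (there are only two codes, and the condition is only imposed for $t\geq3$), the theorem applies unconditionally to $C_1=C$ and $C_2=C'$. If $C*C'\neq0$, the theorem produces codewords $c\in C$, $c'\in C'$ with $1\leq w(c*c')\leq\max(t-1,\,n+t-(k+k'))=\max(1,\,n-k-k'+2)$, and in particular $\dmin(C*C')$ is bounded by this quantity. If $C*C'=0$ there is nothing to prove, since $\dmin$ of the zero code is conventionally $+\infty$ or the statement is read vacuously; in any case the interesting regime is $C*C'\neq0$.

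For a self-contained argument not invoking the general theorem, I would mirror the puncturing proof of the classical Singleton bound. First I would reduce to the case where both codes have full support: replacing $C$ and $C'$ by their projections onto $I=\Supp(C)\cap\Supp(C')$ can only decrease $n-k-k'$, as in Remark~\ref{laremarque}(c), and does not change the weight of any product $c*c'$. Then, assuming full support, I would separately handle the high-dimensional case $k+k'>n$ and the low-dimensional case $k+k'\leq n$. In the high-dimensional case, Lemma~\ref{lelemme} with $t=2$ immediately gives a product of weight between $1$ and $t-1=1$, i.e.\ weight exactly $1$. In the low-dimensional case, I would project onto the first $k+k'-1$ coordinates and split into two subcases exactly as in the end of the proof of Theorem~\ref{letheo}: either all projections remain full-rank, and Lemma~\ref{lelemme} applied to the punctured codes yields a product of weight $\leq t-1=1$ that lifts to weight $\leq n-k-k'+2$; or some code acquires a nonzero kernel element, giving directly a codeword of small weight whose support meets the full support of the other code.

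The genuinely substantive input here is Lemma~\ref{lelemme}, whose proof is the delicate part; Proposition~\ref{laprop} itself is then essentially a corollary once one observes that $t-1=1$ trivializes the "max" in the high-dimensional regime. So the main obstacle is not in this deduction but was already dispatched in the construction of the sets $A_1,\dots,A_t$ via the basis-exchange argument inside Lemma~\ref{lelemme}; here, with $t=2$, that construction simplifies, and one could even present it inline. The remaining care is purely bookkeeping: checking that the puncturing does not destroy the full-support hypothesis needed to invoke the lemma, and that lifting a weight-one punctured product back to the original length adds at most $n-(k+k'-1)=n-k-k'+1$ to the weight, giving the claimed bound $n-k-k'+2$.
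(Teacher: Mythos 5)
Your deduction is correct: for $t=2$ the {\supco} is indeed vacuous, so Theorem~\ref{letheo} applies unconditionally and immediately gives the bound (and the paper itself notes that Proposition~\ref{laprop} is a special case of the theorem); your self-contained fallback is just the theorem's own proof specialized at $t=2$ --- reduction to full support, Lemma~\ref{lelemme} in the regime $k+k'>n$, and the puncturing argument otherwise --- and your bookkeeping (full support is preserved by projection, lifting a weight-$1$ punctured product costs at most $n-k-k'+1$) is right. However, this is precisely the route the paper deliberately does \emph{not} take in Section~\ref{alternative}: the point of that section is an independent proof by entirely different means, combining Lemma~\ref{dim+dperp} (a lower bound $\dim(C_1*C_2)\geq\min(n,\dim(C_1)+\dmin(C_2^\perp)-2)$ in terms of the dual distance) with the duality Lemma~\ref{adj} ($C\perp C'*(C*C')^\perp$) and the classical Singleton bound applied to $C'$. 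The comparison is instructive: your route (via Theorem~\ref{letheo}) yields the stronger, constructive conclusion that the low-weight codeword can be taken in elementary product form $c*c'$, but it leans on the delicate basis-exchange construction inside Lemma~\ref{lelemme}; the paper's duality route is short and avoids that machinery entirely, at the price of producing only some nonzero codeword of $C*C'$ (a priori a linear combination of products) and of working only for $t=2$, with a slightly less precise statement. So your proof is valid, but if the intent was to reprove the proposition independently of the main theorem, it does not achieve that --- it reuses the theorem (or its proof) rather than offering the alternative argument the paper presents.
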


Compared with Theorem~\ref{letheo}, an obvious restriction is that we consider the product
of only $t=2$ codes. But Proposition~\ref{laprop} is also less precise: given $C*C'\neq 0$,
it says there is a nonzero codeword $z$ of weight at most $\max(1,n-k-k'+2)$, but it does not
give any information on it; while from Theorem~\ref{letheo}, we know it can be taken
in elementary product form $z=c*c'$ (and moreover it can be effectively computed).

However Proposition~\ref{laprop} can be proved using entirely different methods.
For this we will need two lemmas.

\begin{lemma}
\label{dim+dperp}
Let $C_1,C_2\subset(\Fq)^n$ be two linear codes.
Suppose both $C_1,C_2$ have dual minimum distance at least $2$,
\ie full support. Then:
\beq
\dim(C_1*C_2)\geq\min(n,\,\dim(C_1)+\dmin(C_2^\perp)-2).
\eeq
\end{lemma}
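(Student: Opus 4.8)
The plan is to argue dually. Write $d'=\dmin(C_2^\perp)$ and $k_1=\dim(C_1)$; the asserted bound $\dim(C_1*C_2)\geq\min(n,k_1+d'-2)$ is equivalent to $\dim\big((C_1*C_2)^\perp\big)\leq\max(0,\,n-k_1-d'+2)$. Using the adjunction $\langle z,a*c\rangle=\langle z*a,c\rangle$, one identifies $Z:=(C_1*C_2)^\perp=\{z\in(\Fq)^n:z*C_1\subseteq C_2^\perp\}$. The basic local observation is then: if $0\neq z\in Z$ has support $J=\Supp(z)$, then for every $a\in C_1$ the word $z*a$ lies in $C_2^\perp$ and has support $J\cap\Supp(a)$, hence weight $0$ or $\geq d'$; this says exactly that the punctured code $\pi_J(C_1)$ has minimum distance $\geq d'$. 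Since $C_1$ has full support, $\pi_J(C_1)\neq0$, so the classical Singleton bound applied to $\pi_J(C_1)$ gives $\dim\pi_J(C_1)\leq|J|-d'+1$; in particular $|J|\geq d'$, and whenever $J$ contains an information set of $C_1$ this improves to $|J|\geq k_1+d'-1$.

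Before converting this into a global dimension bound I would make two harmless reductions. First, since $\dim(C_1*C_2)$, the number $\dmin(C_2^\perp)$ and the full-support hypotheses are all unchanged under extension of scalars, one may assume $q$ arbitrarily large; then a generic codeword $c_0\in C_2$ has full support, and rescaling all coordinates by $c_0$ (an isometry preserving every hypothesis and the conclusion) lets one further assume $\mathbf 1\in C_2$. In that situation $C_1=C_1*\mathbf 1\subseteq C_1*C_2$, so it remains to show $\dim(C_1*C_2)\geq k_1+d'-2$ whenever this does not exceed $n$. I would then induct on $k_1$. The base case $k_1=1$ is immediate: then $C_1=\langle a\rangle$ with $a$ of full support, $C_1*C_2\cong C_2$, and Singleton for $C_2^\perp$ gives $\dim C_2\geq d'-1$. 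For the inductive step one picks a codimension-one subcode $C_1'\subsetneq C_1$ still of full support (available once $q$ is large), so that $\dim(C_1'*C_2)\geq\min(n,k_1+d'-3)$ by induction; it then suffices to prove that $C_1*C_2\supsetneq C_1'*C_2$ unless $C_1'*C_2$ already equals $(\Fq)^n$.

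The step I expect to be the main obstacle is exactly this last one: showing that enlarging $C_1$ strictly enlarges the product, equivalently that $Z$ drops dimension when $C_1$ grows. The natural attempt is to feed the local estimate above into a minimal-weight argument: take $z\in Z$ of minimal weight and force $J=\Supp(z)$ to contain an information set of $C_1$, which would give $\dmin(Z)=|J|\geq k_1+d'-1$ and hence $\dim Z\leq n-\dmin(Z)+1\leq n-k_1-d'+2$ by Singleton for $Z$. The difficulty — the analogue here of the tension emphasized in the discussion after Theorem~\ref{letheo}, keeping an intersection of supports small yet nonempty — is that $C_1$ may carry a nonzero codeword supported entirely off $\Supp(z)$, so that $\pi_J(C_1)$ is a proper quotient of $C_1$ and the sharper inequality $|J|\geq k_1+d'-1$ is lost. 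One must therefore either engineer a better choice of $z$ (or of the subcode $C_1'$), or replace the crude Singleton bound on $Z$ by a count that uses the local estimates for \emph{all} nonzero $z\in Z$ at once together with the hypothesis $\dmin(C_1^\perp)\geq2$. I would expect the cleanest route to marry the minimal-weight argument with a shortening of $C_1*C_2$ at a coordinate lying outside $\Supp(z)$, which decreases $n$ and $k_1$ simultaneously while leaving $d'$ unchanged, so that the induction closes.
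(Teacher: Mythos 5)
Your reductions (scalar extension, normalizing $\mathbf 1\in C_2$) and the local observation about $Z=(C_1*C_2)^\perp$ are correct, but the proof never closes: the whole argument funnels into the claim that for a full-support codimension-one subcode $C_1'\subsetneq C_1$ one has $C_1*C_2\supsetneq C_1'*C_2$ unless $C_1'*C_2=(\Fq)^n$, and you explicitly leave this step open (``the main obstacle''). That step is not merely unproved; as stated it is false. Take $n=4$ and $C_1=C_2=C$ the code spanned by $(1,1,0,0)$ and $(0,0,1,1)$: then $\mathbf 1\in C_2$, all codes have full support, $d'=\dmin(C_2^\perp)=2$, and every full-support codimension-one subcode $C_1'$ of $C$ is spanned by some $(a,a,b,b)$ with $a,b\neq0$, whence $C_1'*C_2=C=C_1*C_2\neq(\Fq)^4$. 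So the product neither grows nor is full, and your induction cannot pass from $k_1-1=1$ to $k_1=2$ in this instance (the lemma itself still holds there, but only because the inductive target was not tight --- the mechanism you propose stalls). The closing suggestions (a better choice of $z$ or of $C_1'$, or shortening at a coordinate off $\Supp(z)$) are precisely the missing content and are not carried out, so what you have is a reduction of the lemma to an open, and in this form incorrect, claim.

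For comparison, the paper's proof needs neither induction nor genericity: setting $m=\min(n,k_1+d_2^\perp-2)$, one has $m-k_1+1\leq d_2^\perp-1$, so a codeword of $C_2$ can be prescribed arbitrarily on any $m-k_1+1$ coordinates; putting $C_1$ in systematic form $(I_{k_1}|X)$, for each $i\in[m]$ one multiplies a suitable row of this generator matrix by such an interpolating $y\in C_2$ to obtain a word of $C_1*C_2$ that is nonzero at $i$ and vanishes on $[m]\setminus\{i\}$, and these $m$ words are linearly independent. Your observation that $\pi_J(C_1)$ has minimum distance at least $d'$ for $J=\Supp(z)$, $z\in Z$, is essentially the dual shadow of this interpolation property, so the right ingredient is visible in your write-up, but it is never assembled into a complete argument.
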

\begin{proof}
Set $k_1=\dim(C_1)$, $d_2^\perp=\dmin(C_2^\perp)$,
and $m=\min(n,k_1+d_2^\perp-2)$.
Then $m-k_1+1\leq d_2^\perp-1$, so any $m-k_1+1$ columns of $C_2$ are
linearly independent, in particular:

\textbf{Fact.} For any set of indices $J\subset[n]$ of size $|J|=m-k_1$,
and for any $j_0\not\in J$,
there is a codeword $y\in C_2$ with $y_{j_0}=1$
and $y_{j}=0$ for $j\in J$.

Now (after possibly permuting coordinates) put $C_1$ in systematic form,
with generator matrix $G_1=(I_{k_1}|X)$. To show $\dim(C_1*C_2)\geq m$,
we will find, for each $i\in[m]$, a codeword $z\in C_1*C_2$
with $z_i\neq 0$ and $z_j=0$ for $j\in[m]\moins\{i\}$.
We distinguish two cases.

First, suppose $i\in[k_1]$. Let $x$ be the $i$-th row of $G_1$,
and let $y$ be given by the Fact with $j_0=i$
and $J=[m]\moins[k_1]$. 
Then we can set $z=x*y$.

Otherwise, suppose $i\in[m]\moins[k_1]$. Since $C_1$ has full support, there is
a row of $G_1$ that is nonzero at $i$. Say this is the $i'$-th row,
and denote it by $x$.
Now let $y$ be given by the Fact with $j_0=i$
and $J=\{i'\}\cup([m]\moins([k_1]\cup\{i\}))$.
Then again $z=x*y$ does the job.
\end{proof}

\begin{lemma}
\label{adj}
For any two linear codes $C,C'\subset(\Fq)^n$ we have
\beq
C\;\;\perp\;\; C'*(C*C')^\perp.
\eeq
\end{lemma}
\begin{proof}
Let $\tau:(\Fq)^n\to\Fq$ be the ``trace'' linear map,
$\tau(x_1,\dots,x_n)=x_1+\cdots+x_n$.
Note that the canonical scalar product $\langle\cdot|\cdot\rangle$ on $(\Fq)^n$
can be written as
$\langle c|c'\rangle=\tau(c*c')$.
Now for any $c\in C$, $c'\in C'$, and $x\in(C*C')^\perp$, we have
\beq
\langle c|c'*x\rangle=\tau(c*(c'*x))=\tau((c*c')*x)=\langle c*c'|x\rangle=0
\eeq
and we conclude by passing to the linear span.
\end{proof}

\medskip

We can now proceed. In what follows let
$\widetilde{d}=\dmin(C*C')$.

\medskip

\begin{proof}[Proof of Proposition~\ref{laprop}]
It suffices to treat the case where $C$ and $C'$ both have full
support. For then, to deduce the case of general $C$ and $C'$, just project
on the intersection of their supports: this leaves $\widetilde{d}$ unchanged,
while $n-k-k'$ can only decrease (this is the very same argument as in Remark~\ref{laremarque}(c)).

Also suppose $\widetilde{d}\geq2$, otherwise there is nothing to prove.

That $C$ has full support implies that for any $c'\in C'$ of minimum weight $d'=\dmin(C')$,
there is a $c\in C$ whose support intersects that of $c'$ non-trivially, meaning $c*c'\neq0$:
this implies $\widetilde{d}\leq d'$, hence by the classical
Singleton bound
\beq
k'\leq n-\widetilde{d}+1.
\eeq
Lemma~\ref{dim+dperp} applied with $C_1=C'$ and $C_2=(C*C')^\perp$
then gives
\beq
\dim(C'*(C*C')^\perp)\geq k'+\widetilde{d}-2,
\eeq
and by Lemma~\ref{adj} we conclude
\beq
k\leq n-\dim(C'*(C*C')^\perp)\leq n-k'-\widetilde{d}+2
\eeq
as needed.
\end{proof}

In the author's opinion, Lemmas~\ref{dim+dperp} and~\ref{adj} are very natural and have interest
on their own.
But the way they combine to give this concise but not-so-intuitive proof of Proposition~\ref{laprop}
is quite intriguing.

\section{Upper bound on the generalized fundamental functions}
\label{appl}

An important consequence of Theorem~\ref{letheo} is the following:

\begin{theorem}
\label{lecoro}
We have
$a_q\deux[t](n,d)=\left\lfloor\frac{n}{d}\right\rfloor$
for $1\leq d\leq t$, and
\beq
a_q\deux[t](n,d)\leq\left\lfloor\frac{n-d}{t}\right\rfloor+1\qquad\text{for $t<d\leq n$.}
\eeq
Likewise, $\alpha_q\deux[t](0)=1$, and
\beq
\alpha_q\deux[t](\delta)\leq\frac{1-\delta}{t}\qquad\text{for $0<\delta\leq 1$.}
\eeq
\end{theorem}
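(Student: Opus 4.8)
The plan is to derive Theorem~\ref{lecoro} directly from the product Singleton bound (Theorem~\ref{letheo}) applied with all $t$ codes equal, $C_1=\dots=C_t=C$, together with the trivial lower-bound constructions coming from Reed--Solomon codes (and their asymptotic analogues). Recall that, by Remark~\ref{laremarque}(a), when all the $C_i$ coincide the {\supco} is automatically satisfied, so Theorem~\ref{letheo} applies unconditionally to any single code $C$ of dimension $k$: if $C\deux[t]\neq0$ then $\dmin(C\deux[t])\leq\max(t-1,\,n+t-tk)$.

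First I would prove the upper bounds. Fix $d$ and suppose $C\subset(\Fq)^n$ has $\dim(C)=k$ and $\dmin(C\deux[t])\geq d$. If $d\geq t$, then $d>t-1$, so the bound $\dmin(C\deux[t])\leq\max(t-1,n+t-tk)$ forces $d\leq n+t-tk$, \ie $k\leq\frac{n+t-d}{t}=\frac{n-d}{t}+1$, and since $k$ is an integer, $k\leq\lfloor\frac{n-d}{t}\rfloor+1$; this covers both the case $t<d\leq n$ and gives $k\leq\lfloor\frac{n-d}{t}\rfloor+1$ when $d=t$, which rearranges to $k\leq\lfloor\frac nd\rfloor$ (one checks $\lfloor\frac{n-t}t\rfloor+1=\lfloor\frac nt\rfloor$). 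For $1\leq d<t$ one has to be slightly more careful because the $\max$ could be $t-1$; here I would instead argue that $\dmin(C\deux[t])\leq\dmin(C)$ and also that, if $k\geq2$, a product of $t$ copies of a weight-$w$ codeword has weight $w$, so taking $c$ of minimum weight $\dmin(C)$ in $C$ gives a nonzero codeword of $C\deux[t]$ of that weight, whence $\dmin(C\deux[t])=\dmin(C)\leq n-k+1$ by the classical Singleton bound; thus $d\leq n-k+1$. Combined with the obvious fact that if $\dmin(C\deux[t])\geq d$ then no nonzero product $c\deux[t]$ has weight $<d$, hence no nonzero $c\in C$ has weight $<d$ (as $w(c\deux[t])=w(c)$), one gets $\dmin(C)\geq d$ and the usual counting argument (partition $[n]$ into blocks of size $d$) yields $k\leq\lfloor\frac nd\rfloor$. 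For the lower bound matching this when $d\leq t$: take $C$ to be a generalized Reed--Solomon code of length $n$ and dimension $k=\lfloor\frac nd\rfloor$ restricted appropriately, or more simply a direct sum of $k$ copies of a repetition-type block of length $d$; then $C\deux[t]$ has a codeword of weight exactly $d$ and in fact $\dmin(C\deux[t])\geq d$, giving $a_q\deux[t](n,d)\geq\lfloor\frac nd\rfloor$ and hence equality.

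For the asymptotic statements, I would divide the finite bound by $n$ and take $\limsup$: from $a_q\deux[t](n,\lfloor\delta n\rfloor)\leq\lfloor\frac{n-\lfloor\delta n\rfloor}{t}\rfloor+1$ (valid once $\lfloor\delta n\rfloor>t$, which holds for all large $n$ when $\delta>0$) one gets $\alpha_q\deux[t](\delta)\leq\frac{1-\delta}{t}$. For $\delta=0$, the value $\alpha_q\deux[t](0)=1$ follows because $a_q\deux[t](n,1)=n$ (the code $C=(\Fq)^n$ has $C\deux[t]=(\Fq)^n$ with minimum distance $\geq1$), so the ratio is identically $1$.

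The bounds here are all immediate rearrangements, so there is no real obstacle; the only point requiring a little attention is the small-distance regime $d\leq t$, where the $\max(t-1,\dots)$ in Theorem~\ref{letheo} degenerates and one must fall back on the elementary observation $w(c\deux[t])=w(c)$ to recover the classical Singleton-type count $k\leq\lfloor n/d\rfloor$ and exhibit a matching code. I would also double-check the arithmetic identity $\lfloor\frac{n-t}{t}\rfloor+1=\lfloor\frac nt\rfloor$ so that the two regimes of the stated formula agree at $d=t$.
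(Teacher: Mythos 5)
Your treatment of the range $d\geq t$, the matching block construction for $d\leq t$, and the asymptotic passage all agree with the paper's proof. The genuine problem is the upper bound in the regime $1\leq d<t$. There you reduce the hypothesis $\dmin(C\deux[t])\geq d$ to $\dmin(C)\geq d$ (via $w(c*\cdots*c)=w(c)$) and then invoke a ``usual counting argument (partition $[n]$ into blocks of size $d$)'' to conclude $k\leq\lfloor n/d\rfloor$. No such argument exists: knowing only $\dmin(C)\geq d$ gives nothing better than the classical Singleton bound $k\leq n-d+1$, and an MDS code with $\dmin=d$ has $k=n-d+1$, far larger than $\lfloor n/d\rfloor$ once $d\geq 2$ and $n$ is large. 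So the step from $\dmin(C)\geq d$ to $k\leq\lfloor n/d\rfloor$ is simply false as stated; by passing to $\dmin(C)$ you have discarded the essential information that a power of $C$ (not merely $C$ itself) has minimum distance at least $d$.

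The correct route, and the one the paper takes, is monotonicity in the exponent: for $d\leq t$ one has $\dmin(C\deux[d])\geq\dmin(C\deux[t])\geq d$, equivalently $a_q\deux[t](n,d)\leq a_q\deux[d](n,d)$, and then one applies Theorem~\ref{letheo} with $d$ copies of $C$, i.e. the diagonal case you already established: since $d>d-1$, the bound $\dmin(C\deux[d])\leq\max(d-1,\,n+d-dk)$ forces $dk\leq n$, hence $k\leq\lfloor n/d\rfloor$. With this replacement the rest of your argument --- the case $d\geq t$ via $d\leq n+t-tk$, the block-code construction giving equality for $d\leq t$, the value $a_q\deux[t](n,1)=n$ for $\alpha_q\deux[t](0)=1$, and the normalization for $\delta>0$ --- is correct and essentially identical to the paper's. (A minor point: your alternative generalized Reed--Solomon construction does not in general satisfy $\dmin(C\deux[t])\geq d$ when $d<t$, since the $t$-th power then tends to fill the whole space; the block construction is the one that works.)
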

\begin{proof}
Suppose first $d\geq t$. If $C$ has parameters $[n,k]$ and $\dmin(C\deux[t])\geq d$, Theorem~\ref{letheo} applied
with all $C_i=C$ gives $d\leq n-(k-1)t$, from which the bound
$a_q\deux[t](n,d)\leq\left\lfloor\frac{n-d}{t}\right\rfloor+1$ follows.

In particular, on the ``diagonal'' $d=t$ we find
$a_q\deux[t](n,t)\leq\left\lfloor\frac{n-t}{t}\right\rfloor+1=\left\lfloor\frac{n}{t}\right\rfloor$
for all $t$.
 
Then for $d<t$, we deduce
$a_q\deux[t](n,d)\leq a_q\deux[d](n,d)\leq\left\lfloor\frac{n}{d}\right\rfloor$.

To show that this upper bound is in fact an equality for $d\leq t$, partition the set $[n]$
of coordinates
into $\left\lfloor\frac{n}{d}\right\rfloor$ subsets of size $d$ or $d+1$, and consider
the code $C$ spanned by their characteristic vectors (observe $C\deux[t]=C$).

This done, letting $n\to\infty$ and normalizing then gives the estimate on
$\alpha_q\deux[t](\delta)$. (For the special value $\alpha_q\deux[t](0)=1$,
we used $a_q\deux[t](n,1)=n$.)
\end{proof}

Note in particular that for $t\geq2$, the function $\alpha_q\deux[t](\delta)$
is not continuous at $\delta=0$, in striking contrast with the ``usual''
function $\alpha_q(\delta)$. Perhaps one could modify the definition of $\alpha_q\deux[t]$
to remove this discontinuity. Nevertheless it remains $\limsup\alpha_q\deux[t](\delta)\leq\frac{1}{t}<1$
as $\delta\to0$.
Thus for small $\delta$ our bound clearly improves
on the inequality $\alpha_q\deux[t](\delta)\leq\alpha_q(\delta)$,
and in fact one can show it is so for all $\delta<1-\epsilon(q)$,
with $\epsilon(q)\to 0$ as $q\to\infty$.

\medskip

Conversely
it is interesting to compare the upper bound in Theorem~\ref{lecoro} with
known lower bounds. From algebraic-geometry codes one easily gets
(see \cite{agis} for more details)
\beq
\alpha_q\deux[t](\delta)\geq\frac{1-\delta}{t}-\frac{1}{A(q)}
\eeq
where $A(q)$ is the Ihara constant. When $q\to\infty$,
the two bounds match.
On the other hand, for $q$ small, the two bounds remain far apart.
For $t=2$, even with the improved lower bound of \cite{agis}, namely
\beq
\alpha_q\deux(\delta)\geq\frac{1}{s+1}\left(\frac{1}{1+q^s}-\frac{1}{A(q^{2s+1})}\right)-\frac{2s+1}{1+q^s}\,\delta
\eeq
(for any $s\geq0$),
there remains much room for progress.
For instance, for $q=2$, the best we get ($s=4$) is
\beq
0.001872-0.5294\,\delta\;\leq\; \alpha_2\deux(\delta)\;\leq\; 0.5-0.5\,\delta.
\eeq
Still for $q$ small, the situation for $t\geq3$ is even worse:
no nontrivial lower bound on $\alpha_q\deux[t]$ is known then!

\section*{Acknowledgment}

The author is indebted to the Associate Editor Dr.~Navin Kashyap for his
proof of the case~$t=2$ of what is now Lemma~\ref{lelemme}.
This was the key starting point in a series of improvements that led
from Proposition~\ref{laprop} (which the author had beforehand)
to the now much more general Theorem~\ref{letheo}.

\end{document}